\documentclass[onecolumn]{article}

\usepackage[margin=2.0cm,top=2.0cm]{geometry}

\usepackage{amsmath}
\usepackage{amsthm}
\usepackage{graphicx}
\newtheorem{theorem}{Theorem}[section]
\newtheorem{corollary}{Corollary}[theorem]

\newcommand{\keywords}[1]{\textbf{\textit{Keywords:}} #1}

\begin{document}

\title{An improvement of degree-based hashing (DBH) graph partition method, using a novel metric}

\author{Anna Mastikhina \and Oleg Senkevich \and Dmitry Sirotkin \and Danila Demin \and Stanislav Moiseev} \date{Huawei Technologies Co., Ltd.\\
	 \today}

\maketitle
%%
%% The abstract is a short summary of the work to be presented in the
%% article.
\begin{abstract}
  This paper examines the graph partition problem and introduces a new metric, MSIDS (maximal sum of inner degrees squared). We establish its connection to the replication factor (RF) optimization, which has been the main focus of theoretical work in this field. Additionally, we propose a new partition algorithm, DBH-X, based on the DBH partitioner. We demonstrate that DBH-X significantly improves both the RF and MSIDS, compared to the baseline DBH algorithm. In addition, we provide test results that show the runtime acceleration of GraphX-based PageRank and Label propagation algorithms.
\end{abstract}

%%
%% Keywords. The author(s) should pick words that accurately describe
%% the work being presented. Separate the keywords with commas.
\keywords{Distributed Graph Computing, Graph Partitioning, Degree-Based Hashing, Power Law, Large-scale}

\section{Introduction}

In recent years, big graphs have become prevalent in various real-world applications such as web and social network services. Additionally, graphs are widely used for machine learning and data mining algorithms. The entire graph must be partitioned across the machines to execute distributed graph computing on clusters with multiple machines or servers. Graph partitioning quality in this case significantly impacts the performance of any algorithm for this graph. The essential metrics, which have been extensively researched previously are the workload balance and communication cost. So the ideal graph partition method should minimize the cost of cross-machine communication and maintain an approximately balanced workload across all processes.

Let's consider a graph $G = (V, E)$, where $V=\{v_i\}, i=\overline{1,n}$ represents the set of vertices, and $E \subseteq V \times V$ represents the set of edges in $G$. The cardinality of $V$ is denoted as $|V|$, which is equal to $n$. If there exists an edge $(v_i, v_j) \in E$, then $v_i$ and $v_j$ are said to be neighbors. The degree of the vertex $v_i$ is defined as the number of its neighbors and is denoted by $d_i$. Here we will consider directed graphs, the partition method can be also applied to undirected graphs.

There are two main directions of graph partitioning: so-called vertex-cut and edge-cut.
Suppose we have a cluster consisting of $m$ partitions. The goal of vertex-cut graph partitioning is to distribute each edge, along with its corresponding vertices, to one of the $m$ partitions in the cluster. It is important to note that each edge can only be assigned to one partition, while vertices can be duplicated across different partitions. For the case of edge-cut, the objective is to distribute the vertices among the partitions, allowing for possible duplicates of edges.

The edge-cut partition was initially a widely used approach. Edge-cut minimization is effective in computational grids and was used for solving differential equations. However, finding the optimal solution for minimizing edge-cut in a graph is an NP-hard problem. Therefore, heuristic approaches have been utilized since 1970s \cite{klin}. However, vertex partition has been found to be ineffective for real-world graphs with skewed distributions, such as power-law distributions \cite{powerlaw}, \cite{powergraph}. As a result, the focus of this paper will be on the vertex-cut approach.

The edge partition technique has been found to be more suitable for real-world graphs with irregular structures compared to the vertex partition approach. This is particularly true for social graphs, which often exhibit a power-law distribution where the number of vertices with a certain degree decreases exponentially as the degree increases \cite{powerlaw}.

There are several graph processing systems available that not only offer the standard set of graph analytics algorithms but also incorporate built-in partitioning algorithms. Some of these systems include the PowerGraph \cite{powergraph},  PowerLyra \cite{powerlyra}, GraphX \cite{graphx}, and the Pregel \cite{graphxcost} family.

In this paper, we will use GraphX for partition quality evaluation. GraphX is a distributed graph engine that is developed using Spark \cite{graphx}. It enhances Resilient Distributed Graphs (RDGs), which are used to associate records with vertices and edges. GraphX uses a flexible vertex-cut partitioning approach to encode graphs as horizontally partitioned collections by default. The number of partitions is equivalent to the number of blocks (64 Mb) in the input file. GraphX is widely used for graph analytics both in research and the industry. More specifically, we will use algorithms based on a GraphX Pregel API for the method evaluation \cite{graphxcost}. 

\subsection{Article structure}

The article is structured as follows: Chapter \ref{metrics_chap} will discuss the metrics that affect performance time, including a novel metric. Chapter \ref{sota} will provide a brief overview of the state-of-the-art partition algorithms, with a focus on two baseline methods: EdgePartition2D and DBH. In Chapter \ref{dbhx}, we will introduce our new method, called DBH-X. Theoretical analysis will be presented in Chapter \ref{theory}, including a theorem on the relationship between partition metrics. Chapter \ref{results} will present the testing results of our implementation of DBH-X, including an examination of acceleration of the algorithms used in GraphX. The last chapter \ref{conclusion} is a conclusion.
\section{Metrics for the edge partitioning}\label{metrics_chap}

\subsection{Standard metrics for the Vertex-Cut partitioning}

We consider edge partitioning with the sets of partitions $E_i$, where $E = E_1 \cup E_2 \cup \ldots \cup E_m$. The following metrics have been used in the literature for quality control of the partitioning:

\begin{enumerate}
	\item Balance
	\begin{equation}\label{eq:balance}
		B = \frac{\max_{i} |E_i|}{\frac{|E|}{m}}
	\end{equation}
	
	\item Replication Factor (RF)
	\begin{equation}\label{eq:rf}
		RF=\sum_{i=1}^{m} \frac{|V(E_i )|}{|V|}  
	\end{equation}

	where $V(E_i)$ is a set of vertices incident to edges of $E_i$,
	
	\item Number of repeated vertices in different partitions 
	
	\begin{equation}\label{eq:r}
		R= \sum_{i=1}^m (|V(E_i )|)-|V| 
	\end{equation}
	
	\item 	The number of frontier vertices --- vertices appearing in more than one partition
	
	\item Communication cost   
	\begin{equation}\label{eq:communication_cost}
		C= \sum_{i=1}^m |F(E_i )|
	\end{equation}
	
	where set $F(E_i)$ consists of the frontier vertices adjacent to edges from $E_i$.

 A better partition should have smaller values for these metrics.
	
\end{enumerate}

\subsection{Maximal Sum of Inner Degrees Squared}

We introduce a novel metric for evaluating partition quality called the Maximal Sum of Inner Degrees Squared on one partition (MSIDS). It can be computed as follows:

\begin{equation}\label{eq:msids}
	MSIDS = \max_j \sum_{v\in |V(E_j)|} d(j, v)^2,
\end{equation}

where $d(j,v)$ denotes inner degree of the vertex $v$ located on some partition $E_j$ which means degree for graph $(V(E_j), E_j)$.

Typically, optimal performance can be achieved by using graph partitioning with metrics such as balance and replication factor close to their optimal values. However, there are cases where these metrics do not accurately reflect algorithm performance time. For example, the Label Propagation algorithm implemented in GraphX's Pregel may experience slow performance on the first iteration when using a partition with good balance and replication factor, whereas a partition with slightly worse metric values may result in better runtime. A similar issue arises with the shortest path algorithm.

In Table \ref{tab:metr} we provide metrics (1) --- (5) for two partitions obtained by using EdgePartiton2D method, implemented in GraphX (denoted as Edge2D), and a method where all edges are distributed evenly to partitions but keeping edges with the same source vertices on one partition (denoted as Partition1). Partition1 has better values for almost all metrics 1)-5) described in the previous subsection. As a result, Connected Components algorithm works faster. But the runtime of LPA with this partition is more than 50 minutes while the runtime with EdgePartition2D is around 6 minutes.

\begin{table}
\caption{\label{tab:metr} different metrics for 2 partition methods ( graph uk-2002, number of partitions is 210)}
	\centering
	\begin{tabular}{c|c|c|c|c|c|}
		partition  & balance & RF & \# repeated & \# frontier & communication \\
		algorithm & &  & vertices & vertices & cost \\\hline
		Edge2D & 1.0102 & 16.17 & $2.8\cdot 10^8$ & $18.5\cdot 10^6$ & $20.9\cdot 10^7$ \\\hline
		Partition1 & 1.0155 & 1.47 & $8.7\cdot 10^6$ & $4.3\cdot 10^6$ & $1.3\cdot 10^7$
	\end{tabular}
\end{table}

In these cases, message merging is a bottleneck. Consider Pregel implementation of the Label propagation algorithm using Spark version 2.4.5 and Scala version 2.11.8. Messages are formed for all edges of the partition and are sent to both source and destination vertices. The {\bf mergeMsg} function is then used to combine all messages destined to the same vertex.

\begin{equation*}
	\begin{split}
def\ {\bf mergeMessage}(
	count1: Map[VertexId, Long],\\
 	count2: Map[VertexId, Long]):\\
 	Map[VertexId, Long] = \{\\
	(count1.keySet ++ count2.keySet).map \{ i =>\\
		val\ count1Val = count1.getOrElse(i, 0L)\\
		val\ count2Val = count2.getOrElse(i, 0L)\\
		i -> (count1Val + count2Val)\\
	\}(collection.breakOut)
\end{split}
\end{equation*}

This process occurs within a single partition without interaction with others. Number of edges inside one partition $E_i$ is  $(\sum_{v\in V(E_i)} deg_{inner} (v))/2$.
Message merging is performed consequently, in this implementation for one vertex we perform $d$ (inner degree) summations of $(set, value)$ for each neighbor, where the set has the size from $1$ to $d$. So
merging all messages has the number of operations depending on $deg_{inner} (v)^2$.
That makes dangerous high inner vertex degrees inside one partition, this stage can be significantly slowed down.

Therefore, high inner vertex degrees within a partition can be detrimental. In Chapter \ref{metr_restr}, we will analyze the relationship between the replication factor and the maximal sum of inner degrees squared.

\section{Existing approaches}\label{sota}

Despite numerous studies on the graph partitioning problem, there is no one-size-fits-all method that can accelerate any algorithm for any graph. Generally, in edge partitioning, most researchers agree on the importance of reducing the replication factor (i.e., the average number of vertex replicas) while maintaining a good balance to improve runtime.

Partition algorithms can be broadly categorized into two types: those that use extensive graph information, which can be gathered for example by locally traversing some parts of the graph, and those that use less information, resulting in faster partitioning. Algorithms that utilize more graph information, like HEP \cite{hep} or NE \cite{ne}, tend to yield better acceleration but the partition itself takes a longer time. In contrast, other algorithms, use less information but can perform partitioning more quickly. In practice, the choice of partition algorithm depends on the algorithm being accelerated, as well as the size and structure of the graph. A tradeoff must be made between partition quality and time efficiency in each specific case.

It is important to note that the choice of partitioning algorithm depends on various factors such as graph size, structure, and the algorithm to be accelerated. For instance, if the graph is very large, a linear time algorithm like DBH \cite{dbh} may be more appropriate as it is faster than algorithms like METIS \cite{metis} or subgraph-based approach \cite{subgraph} that use more graph information. Similarly, if the algorithm being accelerated has a quadratic dependence on the number of inner degrees of vertices, then partitioning algorithms that tend to have large inner degrees for partitions (HEP \cite{hep}, NE \cite{ne}) may be less suitable.

Furthermore, some partitioning algorithms are designed to handle specific types of graphs, such as Powerlyra \cite{powerlyra} which is optimized for power-law graphs. In such cases, it may be beneficial to use specialized algorithms like DBH.

Finally, it is worth noting that the usage of streaming techniques can also affect the choice of partitioning algorithm. For example, some algorithms like HDRF \cite{hdrf} and AdWISE \cite{adwise} are designed to work well with streaming data. Therefore, when choosing a partitioning algorithm, one must consider the specific requirements and characteristics of the problem at hand.

In this study, we don't consider streaming techniques and algorithms requiring too large time for partition itself. The goal is to receive target algorithms acceleration in such a way that end-to-end time including graph partition time will be minimal. 

\subsection{EdgePartition2D}

One very common partition method is called EdgePartition2D. It maps the edges of the graph into a two-dimensional grid of $n$ partitions, where $n$ is a user-defined parameter. To assign an edge to a partition, EdgePartition2D computes the hash function for each endpoint of the edge. Then, the edge is assigned to the partition, corresponding to the grid cell that matches the pair of these hashes. This method guarantees the replication factor less than $2\sqrt{n}-1$, where $n$ is the number of partitions. 

More formally, let's consider the adjacency matrix of a given graph, which is a square matrix of size $|V|\times|V|$ where each element $(i,j)$ is 1 if there is an edge between vertices $v_i$ and $v_j$ and 0 otherwise. We can split this matrix into square blocks of equal size, resulting in a total of $n$ blocks arranged in $\sqrt{n}$ rows and $\sqrt{n}$ columns. Each block represents a partition, and the edges contained within that block belong to that partition. For example, the edge $(i,j)$ belongs to the partition corresponding to the block that contains the element $(i,j)$.

If the desired number of partitions is not a perfect square, then some partitions will be combined. Now let's consider a single vertex $i$. All edges adjacent to this vertex can only exist within the $i$-th row and column of the matrix. Thus, the replication factor for vertex $i$ is the number of partitions that contain edges adjacent to $i$. Therefore the replication factor for any vertex is bounded by $2\sqrt{n}-1$.

EdgePartition2D is frequently used as a reference partitioning method implemented in GraphX. The figure \ref{fig:Edge2D_ex}
 shows the graphical representation of EdgePartition2D.

\begin{figure}
	\centering
	\includegraphics[width=0.6\textwidth]{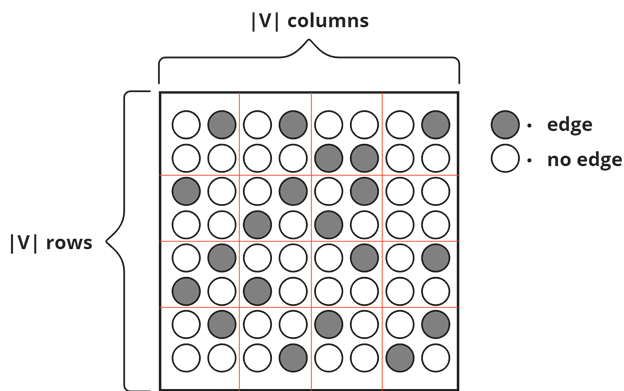}
	\caption{\label{fig:Edge2D_ex} Edge2D partition}
\end{figure}

\subsection{DBH partition}

Degree-based hashing (DBH) sets the partition ID to the edge $(v_i, v_j)$ where vertices $v_i, v_j$ have degrees $d_i, d_j$ by the following rule:

\begin{equation}
	edge\_hash(v_i,v_j) = \begin{cases}
		vertexHash(v_i) & \text{if $d_i< d_j$,}\\
		vertexHash(v_j) & \text{otherwise.}\\
	\end{cases}
\end{equation}

A commonly used method for generating the hash function is to take the remainder of the division of the vertex index by the total number of partitions.

\begin{equation}
vertexHash(v_i) = v_i  \mod numParts.
\end{equation}

The original article \cite{dbh} demonstrates that DBH produces partitions with a lower replication factor compared to randomly assigned edges while maintaining good balance.

\textbf{DBH partition example}

We consider a graph from figure \ref{fig:dbh} in this example. The partition number we set as 3. We denote as $P(u, v)$ a partition id for the edge $(u,v)$.

\begin{equation*} \label{eq1}
	\begin{split}
		deg(0) \textless deg(1), P(0, 1) & = 0 \\
		deg(0) \textless deg(3), P(0, 3) & = 0 \\
		deg(1) \textless deg(4), P(1, 4) & = 1\\
		deg(1) \geq deg(5), P(1, 5) & = 5\mod 3 = 2\\
		deg(2) \textless deg(0), P(2, 0) & = 2\\
		deg(2) \textless deg(3), P(2, 3) & = 2\\
		deg(3) \geq deg(4), P(3, 4) & = 4\mod 3 = 1\\
		deg(4) \textless deg(5), P(4, 5) & = 4\mod 3 = 1\\
		deg(5) \textless deg(3), P(5, 3) & = 5\mod 3 = 2
	\end{split}
\end{equation*}

\begin{figure}
	\centering
	\includegraphics[width=0.6\textwidth]{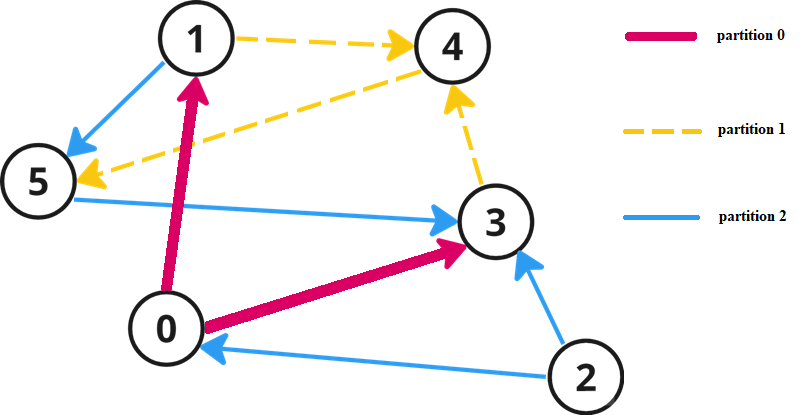}
	\caption{\label{fig:dbh} DBH partition, red lines --- edges on the partition 0, yellow --- partition 1, blue --- partition 2}
\end{figure}

Using formula \ref{eq:rf}, we can compute the replication factor for this case, which is $\frac{2+3+1+3+1+2}{6} = 2$. The theoretical background for the DBH method is provided in section \ref{sec:dbh_theory}

\section{DBH-X partition approach}\label{dbhx}
This section introduces two enhancements to the DBH method, which improve both metrics and performance time for the considered algorithms.

\subsection{Threshold $\tau$}

We can improve the DBH method by introducing a threshold $\tau$ for vertex degrees. In this method, the edge partition is defined as follows:

\begin{equation}
	\begin{gathered}
	If\ d_i>\tau\ or\ d_j>\tau:\\ 
	edge\_hash(v_i,v_j) = \begin{cases}
		vertexHash(v_i) & \text{if $d_i\leq d_j$}\\
		vertexHash(v_j) & \text{if $d_i > d_j$}\\
	\end{cases}, \\  
	% numbering LaTeX hack, I'll need to make it better
	\addtocounter{equation}{1}\tag{\theequation} 
	else:\\ 
 edge\_hash(v_i,v_j) = \begin{cases}
		vertexHash(v_i) & \text{if $v_i\leq v_j$}\\
		vertexHash(v_j) & \text{if $v_i > v_j$}\\
	\end{cases}     
\end{gathered}
\end{equation}

	This technique partitions certain edges based on the hash function derived from the source vertex ID. (We can think that vertices are numbered in a way  that the source vertex has lower number than the destination vertex.) Specifically, it partitions the subgraph induced by the low-degree vertices using the source partition instead of the DBH partition. As this subgraph is not significantly skewed, using the DBH method does not necessarily improve the replication factor in a provable way. In practice, threshold usage has been found to result in better replication factor values (see Table \ref{tab:rfs}).
	
	\subsection{Spread technique}
	
	In section \ref{metr_restr}, we will show that decreasing one metric, such as the replication factor, can result in an increase in another metric, such as the maximal sum of inner degrees squared. High inner degrees are a common occurrence in the DBH method. Due to this method, all the edges leading from some vertex to a vertex with a higher degree are placed in a single partition. As a result, a vast majority of edges leading from any vertex end up in the same partition, while the remaining edges are distributed somewhat evenly. To address this issue, we use a parameter called the "spread", it distributes such edges among different sets, leading to a more balanced distribution of edges across partitions.
	
	The spread parameter controls the distribution of partitions into several sets. Each edge is first distributed into one of the parts, and only afterward is distributed to one of the partitions from this part. This helps to control the spread of edges between partitions and improve the balance of the maximal sum of inner degrees squared.
	
	The spread parameter is used to control the maximal sum of inner degrees squared, which can increase when the replication factor is decreased. To accomplish this, the partitions are separated into multiple sets, and each edge is first assigned to a set and then to a partition within that set. Any hash function might be used for this --- the simplest one is just taking a division remainder:
	
	\begin{equation}\label{eq:partitionset}
		whichSet(v_i,v_j) = (v_i  +v_j )\mod spread
	\end{equation}
	
	The operation we have described can be seen as dividing the graph into multiple subgraphs, whose number is determined by the \textit{spread} parameter. Each of these subgraphs is then partitioned into $numParts/spread$ parts. This technique helps to avoid high inner degrees in the partition, which can improve computation time on a single partition and prevent computational bottlenecks. It should be noted that this technique still maintains a balanced distribution.
	
	\begin{figure}
		\centering
		\includegraphics[width=0.5\textwidth]{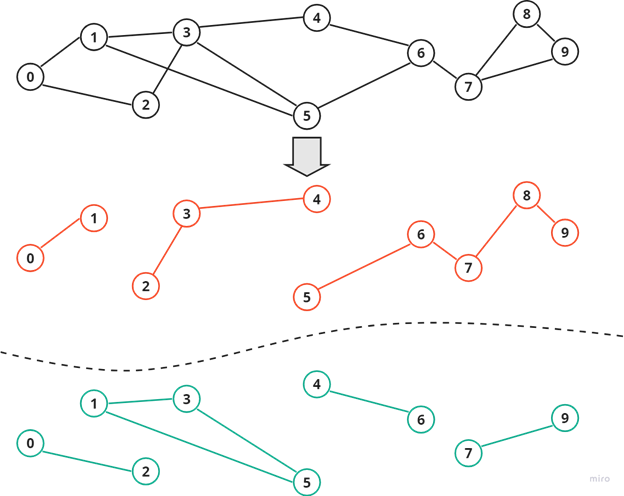}
		\caption{Using spread = 2}
		\label{fig:spread}
	\end{figure}

	\subsection{DBH-X algorithm}
	
	In this section, we describe the DBH-X algorithm, which uses the proposed techniques.
	
	We assume that the partition number can be divided by \textit{spread} without a remainder. If this is not the case, slight modifications are required, namely, some of the sets will contain one additional partition.
	
	\begin{equation}\label{eq:partsInSet}
		partsInSet = numParts / spread
	\end{equation}
	
	For each edge $(v_i, v_j)$ we define $whichSet(v_i,v_j)$ according to \ref{eq:partitionset}.
	
	In addition, we set a hashing function as a remainder from a division of vertex id by the number of partitions $m$: 
	
	\begin{equation}\label{eq:vertexHash}
		vertexHash(v_i, m) = i\mod m
	\end{equation}
	
	Let's consider edge $(v_i, v_j)$ with the corresponding vertices' degrees $d_i$ and $d_j$.
	
	\begin{equation}
		\begin{gathered}
If d_i>\tau\ or\ d_j>\tau:\\
		edge\_hash(v\_i,v\_j)
		=\begin{cases}
			vertexHash(v_i) + whichSet*partsInSet & \text{if $d_i\leq d_j$}\\
			vertexHash(v_j)  + whichSet*partsInSet & \text{if $d_i > d_j$}\\
		\end{cases},\\ 
		\addtocounter{equation}{1}\tag{\theequation} 
else:\\
edge\_hash(v_i,v_j)
		 = \begin{cases}
			vertexHash(v_i) + whichSet*partsInSet  & \text{if $v_i\leq v_j$}\\
			vertexHash(v_j) + whichSet*partsInSet  & \text{if $v_i > v_j$}\\
		\end{cases}   
\end{gathered}
	\end{equation}
	
	By using this partitioning technique instead of DBH, we may sacrifice some replication factor increases for the sake of avoiding high inner degrees within partitions.
	
	\section{Theoretical analysis}\label{theory}
	
	\subsection{DBH theoretical background}\label{sec:dbh_theory}
	
	Let's consider graph $G=(V,E)$ with a set of vertices $V=\{v_i\}_{i=1}^n$ and their corresponding degree sequence denoted by $D = \{d_i\}_{i=1}^n$. We consider a randomized edge distribution to $m$ partitions. The probability of an edge $\{v_i, u\}$ being assigned to the $j$-th partition is $1/m$, and the probability of it not being assigned to the $j$-th partition is $(1-1/m)$. The probability of a copy of vertex $v_i$ being present to the $j$-th partition is $(1-(1-1/m)^{d_i})$, where $d_i$ is the degree of $v_i$. The number of replicas of vertex $v_i$ is denoted by $A(v_i)$.
	
	The following 2 theorems from the article \cite{dbh} provide some theoretical background for the DBH partitioning algorithm.
	
	\begin{theorem} A randomized vertex-cut in $m$ partitions has the expected replication factor of
		
		\begin{equation}
			\mathbf{E} [\frac{1}{|V|} \sum_{v_i\in V}|A(v_i )| ]= \frac{m}{|V|} \sum_{v_i\in V}(1-(1-1/m)^{d_i})   
		\end{equation}
		
	\end{theorem}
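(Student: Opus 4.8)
The plan is to compute the expectation by decomposing $|A(v_i)|$ into a sum of indicator variables and invoking linearity of expectation, which sidesteps any need to reason about the joint distribution of the replica counts. For each vertex $v_i$ and each partition index $j \in \{1,\dots,m\}$ I would introduce the indicator
\[
X_{ij} = \begin{cases} 1 & \text{if a replica of } v_i \text{ lies on partition } j,\\ 0 & \text{otherwise.} \end{cases}
\]
Since $A(v_i)$ is the set of partitions holding a replica of $v_i$, we have $|A(v_i)| = \sum_{j=1}^m X_{ij}$, and the quantity inside the expectation becomes the double sum $\frac{1}{|V|}\sum_{v_i \in V}\sum_{j=1}^m X_{ij}$. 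Linearity of expectation then gives $\mathbf{E}\!\left[\frac{1}{|V|}\sum_{v_i \in V} |A(v_i)|\right] = \frac{1}{|V|}\sum_{v_i \in V} \sum_{j=1}^m \mathbf{P}(X_{ij}=1)$, reducing everything to a single per-vertex-per-partition probability.

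The core computation is the probability $\mathbf{P}(X_{ij}=1)$. A replica of $v_i$ appears on partition $j$ precisely when at least one of the $d_i$ edges incident to $v_i$ is routed to partition $j$. Under the randomized vertex-cut each edge is assigned to partition $j$ independently with probability $1/m$, so the complementary event --- that none of the $d_i$ incident edges lands on $j$ --- has probability $(1-1/m)^{d_i}$. Hence $\mathbf{P}(X_{ij}=1) = 1 - (1-1/m)^{d_i}$, which is exactly the per-vertex replica probability already recorded in the setup.

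Substituting this back, the inner sum over $j$ contributes a factor of $m$ because the probability does not depend on $j$, yielding $\mathbf{E}[|A(v_i)|] = m\bigl(1-(1-1/m)^{d_i}\bigr)$; averaging over all vertices then produces exactly the claimed expression $\frac{m}{|V|}\sum_{v_i \in V}\bigl(1-(1-1/m)^{d_i}\bigr)$. I expect no serious obstacle, since the argument is essentially one application of linearity of expectation; the only point requiring care is to keep the two probabilistic ingredients separate. Linearity over the partitions $j$ holds with no independence assumption whatsoever --- indeed the events $X_{ij}$ for fixed $i$ and varying $j$ are dependent, as a degree-$d_i$ vertex can occupy at most $\min(d_i,m)$ partitions --- whereas the evaluation of $\mathbf{P}(X_{ij}=1)$ genuinely relies on the independent routing of the $d_i$ incident edges.
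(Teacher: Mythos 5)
Your proof is correct: the paper itself does not prove this theorem (it imports it from the original DBH article \cite{dbh}), and your argument --- decomposing $|A(v_i)|$ into per-partition indicators $X_{ij}$, applying linearity of expectation, and computing $\mathbf{P}(X_{ij}=1)=1-(1-1/m)^{d_i}$ from the independent edge assignments --- is exactly the standard derivation, using precisely the per-partition replica probability the paper records in the setup paragraph preceding the theorem. Your closing remark is also the right point of care: independence is needed only for evaluating $\mathbf{P}(X_{ij}=1)$, not for summing the dependent indicators over $j$.
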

	
	In many partitioning algorithms, edges are hashed based on the ID of one of their vertices. Let us consider such a partitioner. 
	
	Following \cite{dbh} we introduce a set $H=\{h_i\}_{i=1}^n$ where $h_i$ denotes the number of $v_i$’s adjacent edges which are hashed by the index of neighbors of $v_i$ according to the vertex-hash function. Additionally, we will refer to edges that are adjacent to vertex $v_i$ and are hashed by the ID of the neighbor of $v_i$ as "decentrally hashed".
	
	\begin{theorem} \label{th_dbh}
		DBH method for $m$ partitions has the following expected replication factor:
		\begin{equation*}
			\mathbf{E} [\frac{1}{|V|} \sum_{v_i\in V}|A(v_i )| ]=\frac{m}{|V|}  \sum_{v_i\in V}(1-(1-1/m)^{h_i+g_i }) \leq 
		\end{equation*}
		\begin{equation}
			\leq \frac{m}{|V|}  \sum_{v_i\in V}(1-(1-1/m)^{d_i}),
		\end{equation}
		where 
		\begin{equation*}
			g_i=\begin{cases}
				1 & \text{if $h_i< d_i$,} \\
				0 & \text{if $h_i=d_i$} \\
			\end{cases}
		\end{equation*}
		
	\end{theorem}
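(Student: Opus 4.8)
The plan is to compute $\mathbf{E}[|A(v_i)|]$ for each vertex separately and then average, exploiting linearity of expectation over the $m$ partitions. First I would fix the probabilistic model implicit in the statement: each vertex $v_k$ receives an independent uniform hash $vertexHash(v_k)\in\{1,\dots,m\}$, and every edge is assigned to the partition given by the hash of its lower-degree endpoint (with ties broken toward $v_j$, exactly as in the DBH rule). The replica count $|A(v_i)|$ is then precisely the number of \emph{distinct} partitions that receive at least one edge incident to $v_i$.

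The key structural observation is how the $d_i$ edges at $v_i$ split. By the definition of $h_i$, exactly $h_i$ of them are decentrally hashed, i.e. assigned through the hash of a neighbor, while the remaining $d_i - h_i$ are centrally hashed through $vertexHash(v_i)$. All centrally hashed edges land in the single partition $vertexHash(v_i)$, so together they contribute at most one distinct partition, and they contribute it precisely when $d_i - h_i \ge 1$, that is when $h_i < d_i$. Each decentrally hashed edge $(v_i,u)$ lands in $vertexHash(u)$; since the relevant neighbors $u$ are distinct vertices, these $h_i$ hashes are mutually independent and independent of $vertexHash(v_i)$.

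Next I would count the effective number of independent uniform hashes whose images determine the partitions occupied by $v_i$. When $h_i < d_i$ there are $h_i$ neighbor hashes together with the central hash $vertexHash(v_i)$, giving $h_i + 1 = h_i + g_i$ independent draws; when $h_i = d_i$ there is no central edge and we have exactly $h_i = h_i + g_i$ draws. In either case the event that a fixed partition $j$ contains $v_i$ is the event that at least one of these $h_i + g_i$ independent uniform draws equals $j$, which has probability $1 - (1 - 1/m)^{h_i + g_i}$. Summing this over $j = 1, \dots, m$ by linearity of expectation yields $\mathbf{E}[|A(v_i)|] = m\bigl(1 - (1-1/m)^{h_i+g_i}\bigr)$, and averaging over $v_i \in V$ gives the claimed equality.

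For the inequality I would simply verify $h_i + g_i \le d_i$: when $h_i < d_i$ we have $g_i = 1$ and $h_i + 1 \le d_i$, while when $h_i = d_i$ we have $g_i = 0$ and $h_i = d_i$. Since $0 < 1 - 1/m < 1$, the map $x \mapsto 1 - (1-1/m)^x$ is increasing, so $1 - (1-1/m)^{h_i+g_i} \le 1 - (1-1/m)^{d_i}$ termwise, and summing completes the bound. The main obstacle, and the only genuinely subtle point, is the collapsing of all centrally hashed edges onto a single partition: this is what replaces the naive exponent $d_i$ by the smaller $h_i + g_i$, and getting the bookkeeping of the $g_i$ correction right — distinguishing the case where a central partition is actually occupied from the case where none exists — is where care is needed.
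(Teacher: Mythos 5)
Your proof is correct, and it is worth noting up front that the paper itself gives no proof of this statement --- the theorem is imported verbatim from the original DBH paper \cite{dbh} --- so the relevant comparison is with the standard argument there, which yours reconstructs: the split of $v_i$'s edges into $h_i$ decentrally hashed ones (hashes of distinct neighbors, hence independent uniform draws) plus the centrally hashed ones collapsing onto the single partition $vertexHash(v_i)$, occupied exactly when $h_i<d_i$, gives $h_i+g_i$ independent draws, so $\mathbf{E}\,|A(v_i)|=m\bigl(1-(1-1/m)^{h_i+g_i}\bigr)$, and the bound follows from $h_i+g_i\le d_i$ with monotonicity of $x\mapsto 1-(1-1/m)^x$. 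Your handling of the one subtle point (the $g_i$ bookkeeping) is exactly right; the only implicit assumptions --- that the graph is simple, so neighbor hashes are genuinely independent, and that hash values are modeled as i.i.d.\ uniform --- are the same ones made in \cite{dbh}.
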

	
	In Theorem \ref{th_dbh}, we worked with sets of decentrally hashed edges, although the theorem did not specify the exact method employed. So, it is important to note that this evaluation is not specific to the DBH partitioner and can also be applied to similar approaches. For exactly one of the ending vertices of the edge, an edge will be decentrally hashed. Therefore, $\sum_{i=1}^n h_i = \frac{1}{2} \sum_{i=1}^n d_i$.
	
	To evaluate the replication factor more precisely for power-law graphs, first, we'll provide some results on replication factor evaluation obtained by Xie et al. in paper \cite{dbh} --- the original authors of the DBH method.

	% Then $\mathbf{E}[h_i] = \lim_{n\to\infty} \frac{1}{n} \sum_{i=1}^n h_i = \frac{1}{2}\mathbf{E}[d_i]$.

	\begin{equation}\label{comparison_05}
		\begin{gathered}
		\frac{m}{|V|}  \sum_{v_i\in V}(1-(1-1/m)^{h_i+g_i })\leq
		% \frac{m}{|V|}  \sum_{v_i\in V}(1-(1-1/m)^{\mathbf{E}[h_i + g_i]}) \leq \\
		 \frac{m}{|V|}  \sum_{v_i\in V}(1-(1-1/m)^{d_i}),
		 	\end{gathered}
	\end{equation}
	
	%while
	
%	\begin{equation}
%		\mathbf{E} [\frac{m}{|V|}  \sum_{v_i\in V}(1-(1-1/m)^{d_i})]\leq \frac{m}{|V|}  \sum_{v_i\in V}(1-(1-1/m)^{\mathbf{E}[d_i]})
%	\end{equation}
	
	In addition, the authors in \cite{dbh} suggest that the optimal distribution of $h_i$ values, for minimizing the replication factor, is achieved when as many values as possible are equal to the greatest $d_i$, while the remaining values are set to zero.
	
	To evaluate $\mathbf{E}[h_i+g_i]$ we denote $b_i$ --- the probability that the random edge adjacent to $i$-th vertex is decentrally hashed. Then $\mathbf{E}[h_i+g_i] = b_i\cdot d_i +1 - b_i^{d_i}\leq b_i\cdot d_i + 1$. For the DBH algorithm $b_i = P(d < d_i) + \frac{1}{2}\cdot P(d = d_i)$ by definition.
	
	So, for replication factor maximization, $b_i$ should be close to $1$ for some $k$ vertices with the highest degrees such that $\sum_{i=1}^k d_i = \frac{1}{2} \sum_{i=1}^n d_i$ or close to this value. Other $b_i$ should be equal to $0$.
	
	For power-law graphs the number of vertices with degree $d$ is proportional to $d^{-\alpha}$ for some positive constant $\alpha$. 
	
	Let us denote the following function --- $B(\alpha, d_{min}, d_{max}) = \sum_{x=d_{min}}^{d_{max}} x^{-\alpha}$. 
	Then $b_i$ for DBH can be evaluated as the fraction of vertices with degrees less than $d_i$:
	
	\begin{equation}{\label{bi}}
		\frac{B(\alpha, d_{min}, d_i - 1)}{B(\alpha, d_{min}, d_{max})} < b_i < \frac{B(\alpha, d_{min}, d_i)}{B(\alpha, d_{min}, d_{max})}
	\end{equation}
	
	Xie et al. in their paper \cite{dbh} prove the following using the inequalities for integral sum.
	
	\begin{equation}
		\begin{gathered}
		\frac{d_{min}^{1-\alpha} - (d_{max} + 1)^{1-\alpha}}{\alpha - 1} < B(\alpha, d_{min}, d_{max}) <\frac{d_{min}^{1-\alpha} - d_{max}^{1-\alpha}}{\alpha - 1} + d_{min}^{-\alpha}
		\end{gathered}
	\end{equation}

	First, consider the upper bound for $b_i$.
	
	\begin{equation}{\label{bi_ineq}}
		\begin{gathered}
		b_i < \frac{B(\alpha, d_{min}, d_i)}{B(\alpha, d_{min}, d_{max})} < \frac{\frac{d_{min}^{1-\alpha} - d_i^{1-\alpha}}{\alpha - 1} + d_{min}^{-\alpha}}{\frac{d_{min}^{1-\alpha} - (d_{max} + 1)^{1-\alpha}}{\alpha - 1}} = \frac{d_{min}^{1-\alpha} - d_i^{1-\alpha} + (\alpha - 1)d_{min}^{-\alpha}}{d_{min}^{1-\alpha} - (d_{max} + 1)^{1-\alpha}}
		\end{gathered}
	\end{equation}
	
	The maximal degree $d_{max}\leq n-1$ though in practice it is much lower.
	$d_i << d_{max}$ for the most of $d_i$ for the power-law graphs and $d_{min}\leq d_i \leq d_{max}$ for all $i$.
	
	Xie et al. in \cite{dbh} take $d_{max} = n-1$ in \ref{bi_ineq} and make the following conclusions:

	\begin{enumerate}
		\item for $0 < \alpha < 1$ and $d_i << n$, $\lim_{n\to\infty} b_i = 0$,
		which means that low-degree vertices will be hashed centrally which is close to "ideal distribution",
		\item for $\alpha > 1$ upper bound of $b_i$ increases as $\alpha$ increases.
	\end{enumerate}
	
	In the first case, when $0 < \alpha < 1$, even if $d_{max} < n-1$ but $d_i$ is much smaller than $d_{max}$, $b_i$ decreases as $\alpha$ decreases. Consequently, for small degrees, $b_i$ approaches zero if $\alpha$ is small enough.
	
	When $\alpha > 1$, the upper bound of $b_i$ increases and can easily reach $1$. For instance, if $d_{min}=1$, $\alpha=1.6$, and $d_i \geq 3$, the numerator of the fraction will be greater than or equal to $1.6 - 3^{-0.6} > 1.08$, while the denominator is positive and less than $1$.
	
	In situations where the upper bound probability exceeds 1, it is important to consider the lower bound. This will be explored in the following subsection.
	
	\subsection{DBH and DBH-X comparison}
	
	In this subsection, we compare the standard DBH method with DBH-X, which involves a threshold parameter denoted as $\tau$. By introducing the threshold, we divide all vertices into two groups: high-degree vertices ($d_i > \tau$) and low-degree vertices ($d_i \leq \tau$).
	
	For high-degree vertices, the partitioning of their incidental edges remains the same as in the standard DBH method, and their contribution to the resulting replication factor is the same as in Theorem \ref{th_dbh}.
	
	However, the main difference arises for edges incidental to low-degree vertices, particularly edges that connect two vertices of low degree (low-to-low edges).
	
	The replication factor of the DBH method can be calculated by the following sum:
	
	\begin{equation} \label{eq:DBH}
		\begin{gathered}
		E_{DBH}[RF | D] = \frac{m}{|V|} (\sum_{v\in V_{high}} (1-(1-1/m)^{h_i+g_i})
		 + \sum_{v\in V_{low}} (1-(1-1/m)^{h_i+g_i}))
		\end{gathered}
	\end{equation}
	
	We use the notation $\overline{h_i}$ to represent the number of edges that are adjacent to vertex $v_i$ and are hashed decentrally in DBH-X.
	
	Then:
	
	\begin{equation}\label{eq:DBH-X}
		\begin{gathered}
		E_{DBH-X} [RF | D] = \frac{m}{|V|} ( \sum_{v\in V_{high}} (1-(1-1/m)^{\overline{h_i}+g_i}) 
		 + \sum_{v\in V_{low}} (1-(1-1/m)^{\overline{h_i}+g_i}))
		\end{gathered}
	\end{equation}
	
	For $v_i \in V_{high}$ the equality $h_i = \overline{h_i}$ holds. So, the only difference between equations \ref{eq:DBH} and \ref{eq:DBH-X} appears in the second term of the sum. Note that for the second terms of equations \ref{eq:DBH} and \ref{eq:DBH-X} the $h_i$ equals the number of $v_i$'s neighbors with the less or equal degree and $\overline{h_i}$ equals the number of $v_i$'s neighbors with lesser ID having low degree (low-to-high edges will be hashed centrally). Also, we note that if $v_i \in V_{low}$ then $h_i \leq \tau$ and $\overline{h_i} \leq \tau$, since $d_i \leq \tau$.
	
	Let us take a look at the lower bound of the inequality \ref{bi} for DBH:
	
	\begin{equation}\label{general_eq_bi}
	b_i > \frac{B(\alpha, d_{min}, d_i - 1)}{B(\alpha, d_{min}, d_{max})} > \frac{d_{min}^{1-\alpha} - d_i^{1-\alpha}}{d_{min}^{1-\alpha} - d_{max}^{1-\alpha} + (\alpha - 1)d_{min}^{-\alpha}}
	\end{equation}
	
	With $d_{min}=1$,
	
	\begin{equation}
	b_i > \frac{1-d_i^{1-\alpha}}{\alpha - d_{max}^{1-\alpha}}
    \end{equation}
	
	If $\alpha > 1$
	
	\begin{equation}\label{lower}
	b_i > \frac{1-d_i^{1-\alpha}}{\alpha}
	\end{equation}
	
	As $d_i$ increases, this lower bound of the probability \ref{lower} tends asymptotically to $1/\alpha$.
	
	\begin{corollary}
		Consider a power-law graph with $\alpha > 1$. $d_{min} \geq 1$. Then the probability $b_i$ that the random edge adjacent to the $i$-th vertex is decentrally hashed in the DBH partition method can be evaluated as
		$$
			b_i > \frac{1-(\frac{d_i}{d_{min}})^{1-\alpha}}{\alpha}.
		$$

	\end{corollary}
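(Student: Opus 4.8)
The plan is to work directly from the general lower bound \ref{general_eq_bi}, keeping $d_{min}$ as a free parameter rather than setting $d_{min}=1$ as in the special case derived immediately above it. Taking the rightmost expression in \ref{general_eq_bi} and factoring $d_{min}^{1-\alpha}$ out of both the numerator and the denominator, I would arrive at the normalized form
\begin{equation*}
b_i > \frac{1 - (d_i/d_{min})^{1-\alpha}}{1 - (d_{max}/d_{min})^{1-\alpha} + (\alpha-1)/d_{min}},
\end{equation*}
where the last term of the denominator uses the identity $d_{min}^{-\alpha}/d_{min}^{1-\alpha} = d_{min}^{-1}$. Since $\alpha>1$ and $d_i \geq d_{min}$, the quantity $(d_i/d_{min})^{1-\alpha}$ lies in $(0,1]$, so the numerator is nonnegative; the same reasoning applied to $d_{max}$ keeps the denominator positive, so the comparison of fractions is well posed.

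The key step is then to bound the denominator from above by $\alpha$. Because $\alpha>1$ forces $(d_{max}/d_{min})^{1-\alpha}\geq 0$, the middle term only decreases the denominator and may simply be dropped; and because $d_{min}\geq 1$, the final term satisfies $(\alpha-1)/d_{min}\leq \alpha-1$. Combining these two observations gives
\begin{equation*}
1 - (d_{max}/d_{min})^{1-\alpha} + (\alpha-1)/d_{min} \leq 1 + (\alpha-1) = \alpha.
\end{equation*}
With a nonnegative numerator and a positive denominator that is at most $\alpha$, replacing the denominator by the larger value $\alpha$ can only shrink the fraction, so $b_i > \frac{1 - (d_i/d_{min})^{1-\alpha}}{\alpha}$, which is exactly the claimed bound.

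The argument is essentially sign bookkeeping, so there is no single deep obstacle; the only real care required is to confirm, before enlarging the denominator, that the numerator $1-(d_i/d_{min})^{1-\alpha}$ is nonnegative and the denominator positive, since a reversed sign in either would flip the direction of the final inequality. As a consistency check I would verify that setting $d_{min}=1$ recovers the already-derived special case $b_i>(1-d_i^{1-\alpha})/\alpha$ from \ref{lower}, which confirms that the factoring has been carried out correctly.
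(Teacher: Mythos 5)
Your proof is correct and follows essentially the same route as the paper's: divide the rightmost expression in \ref{general_eq_bi} by $d_{min}^{1-\alpha}$, then bound the denominator above by $\alpha$ using $(d_{max}/d_{min})^{1-\alpha}\geq 0$ and $(\alpha-1)/d_{min}\leq \alpha-1$. If anything, your version is slightly more careful than the paper's, which writes the strict inequality $\frac{\alpha-1}{d_{min}} < \alpha-1$ (false when $d_{min}=1$) and does not explicitly verify the sign conditions on numerator and denominator that justify enlarging the denominator.
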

	
	\begin{proof}
		Take the right part of \ref{general_eq_bi} and divide the numerator and denominator by $d_{min}^{1-\alpha}$.
		
			\begin{equation}
		\frac{1 - (\frac{d_i}{d_{min}})^{1-\alpha}}{1 - (\frac{d_{max}}{d_{min}})^{1-\alpha} + (\alpha - 1)d_{min}^{-1}}
			\end{equation}
		
		Since
		$\frac{\alpha - 1}{d_{min}} < \alpha - 1$,
		
			\begin{equation}
		b_i > \frac{1 - (\frac{d_i}{d_{min}})^{1-\alpha}}{1 + (\alpha - 1)} =\\
			\frac{1 - (\frac{d_i}{d_{min}})^{1-\alpha}}{\alpha}
			\end{equation}
		
	\end{proof}

Comparing this value with $0.5$, we'll get the following simple corollary.

\begin{corollary}
	For $1<\alpha<2$ and $d_i > d_{min}\cdot(1-\frac{\alpha}{2})^{\frac{-1}{\alpha - 1}}$ for the coefficient $b_i$ denoting the probability that random edge adjacent to the $i$-th vertex is decentrally hashed in DBH partition method, the following inequality holds:
	 $$b_i> 0.5$$. 
\end{corollary}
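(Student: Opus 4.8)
The plan is to build directly on the preceding corollary, which already supplies the lower bound $b_i > \frac{1-(d_i/d_{min})^{1-\alpha}}{\alpha}$. Since this quantity is itself a strict lower bound for $b_i$, it suffices to show that under the stated hypothesis on $d_i$ the expression $\frac{1-(d_i/d_{min})^{1-\alpha}}{\alpha}$ is at least $\tfrac12$; the strict inequality $b_i > 0.5$ then follows immediately from the strictness already present in the previous corollary. Thus the entire statement collapses to a single elementary algebraic manipulation, with no new analytic input required.

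First I would write down the target inequality for the bound, namely $\frac{1-(d_i/d_{min})^{1-\alpha}}{\alpha} \geq \tfrac12$, and clear the denominator (legitimate since $\alpha > 0$) to obtain the equivalent form $\left(\frac{d_i}{d_{min}}\right)^{1-\alpha} \leq 1 - \frac{\alpha}{2}$. At this point I would record the two sign facts that make the remaining steps well-posed: because $\alpha < 2$ the right-hand side $1-\frac{\alpha}{2}$ is strictly positive, and because $\alpha > 1$ the exponent $1-\alpha$ is strictly negative. Both sides are therefore positive reals, so I may safely raise them to an arbitrary real power.

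Next I would invert the inequality by raising both sides to the power $\frac{1}{1-\alpha}$, noting that $\left((1-\tfrac{\alpha}{2})^{-1/(\alpha-1)}\right)^{1-\alpha} = 1-\tfrac{\alpha}{2}$ since the exponents multiply to $1$. This yields the condition $\frac{d_i}{d_{min}} \geq \left(1-\frac{\alpha}{2}\right)^{-1/(\alpha-1)}$, which is exactly the hypothesis $d_i > d_{min}\cdot(1-\frac{\alpha}{2})^{-1/(\alpha-1)}$ (the strict hypothesis feeds through to a strict bound). Tracing the equivalences backward then shows that the hypothesized condition forces the lower bound of $b_i$ strictly above $\tfrac12$, which completes the argument.

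The main obstacle, such as it is, is purely bookkeeping: keeping track of the direction of the inequality when passing through a negative fractional exponent. Concretely, one must observe that the map $t \mapsto t^{1-\alpha}$ is strictly decreasing on the positive reals (since $1-\alpha<0$), so that raising both sides to $\frac{1}{1-\alpha}$ reverses the inequality and the chain of implications is genuinely reversible rather than merely one-directional. Confirming this monotonicity, together with the positivity of $1-\tfrac{\alpha}{2}$ guaranteed by $\alpha<2$, is the only point where care is genuinely needed.
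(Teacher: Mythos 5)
Your proof is correct and is exactly the argument the paper intends: the paper gives no explicit proof, saying only that the corollary follows by ``comparing this value with $0.5$,'' i.e.\ by setting the lower bound $\frac{1-(d_i/d_{min})^{1-\alpha}}{\alpha}$ from the preceding corollary against $\tfrac12$ and solving for $d_i$, which is precisely your algebraic manipulation. Your explicit attention to the sign of the exponent $1-\alpha$ and the positivity of $1-\tfrac{\alpha}{2}$ fills in the routine details the paper leaves implicit.
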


    For example, take $\alpha = 1.6$, than for $\frac{d_i}{d_{min}} > 14$ the value $b_i > 0.5$.
    
    So, %despite the inequality \ref{comparison_05}, 
    for a rather large part of the vertices, the expectation is greater than $0.5\cdot d_i$, so it makes sense to use for this part a model with $b_i=0.5$ instead.
    
	In conclusion, there is no guarantee that hashing by degree distribution will always be optimal for all power-law graphs. 
	For some sets of high-degree vertices values of $b_i$ are rather big and can be considered as close to an "ideal" distribution. But for lower vertices "ideal" $b_i$ should be close to zero but they are greater than $0.5$ starting from some degree. Alternatively, if we choose to hash decentrally in a random way, we can set $b_i = 0.5$ approximately, which is preferable for a significant number of medium-degree vertices. Thus, we can utilize an adaptive threshold to categorize high-degree vertices and apply standard DBH to them while using hashing by vertex ID for low-degree vertices.
	
	\subsection{Metrics restriction} \label{metr_restr}
	
	Let us denote a replication factor of a vertex $v$ as $\rho(v)$. The replication factor for the whole graph equals 
	
	\begin{equation}
	RF=\frac{1}{|V|}  \sum_{v\in V} \rho (v).
	\end{equation}
	
	The degree of a vertex $v$ will be denoted as $d(v)$. The inner degree $d(i, v)$ represents the number of edges from partition $E_i$ that are incident to vertex $v$. The maximal sum of inner degrees squared (MSIDS) can be computed as follows:
	
	\begin{equation}
	MSIDS=\max_i \sum_{v\in V} d(i,v)^2
	\end{equation}
	
	Note that the above definitions of the replication factor and MSIDS don't conflict with definitions \ref{eq:rf} and \ref{eq:msids}.
	
	\begin{theorem}\label{inequality}
		
		For any graph $G=(V, E)$ and any partition into $m$ parts, it is impossible to simultaneously minimize the replication factor and the maximal sum of inner degrees squared within some limit. Specifically, the following inequality holds:
		
		$$
		RF \cdot MSIDS \geq \frac{4|E|^2}{m|V|} 
		$$
	\end{theorem}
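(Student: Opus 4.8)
The plan is to bound each partition separately with the Cauchy--Schwarz inequality and then aggregate the resulting estimates. The central identity is $\sum_{v\in V} d(i,v) = 2|E_i|$, valid because every edge of $E_i$ contributes exactly one to the inner degree of each of its two endpoints, while $\sum_{i=1}^m |E_i| = |E|$ since each edge lands in precisely one partition. First I would apply Cauchy--Schwarz on the $i$-th partition, pairing the vector $(d(i,v))_{v}$ against the indicator of the vertices with positive inner degree. This gives $(2|E_i|)^2 \le |V(E_i)|\cdot \sum_{v\in V} d(i,v)^2$, where $|V(E_i)|$ is the number of vertices incident to edges of $E_i$. Bounding the sum of squares by its maximum over partitions yields $4|E_i|^2 \le |V(E_i)|\cdot MSIDS$, hence $|V(E_i)| \ge 4|E_i|^2/MSIDS$ for every $i$.

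Next I would sum this over all $m$ partitions. On the left, the definition of the replication factor, $RF = \sum_{i=1}^m |V(E_i)|/|V|$, gives $\sum_{i=1}^m |V(E_i)| = RF\cdot|V|$. On the right, I would lower-bound $\sum_{i=1}^m |E_i|^2$ by convexity of $x\mapsto x^2$ (equivalently, a second Cauchy--Schwarz against the all-ones vector): since $\sum_{i=1}^m |E_i| = |E|$, one has $\sum_{i=1}^m |E_i|^2 \ge |E|^2/m$. Combining the two sides, $RF\cdot|V| \ge (4/MSIDS)\cdot(|E|^2/m)$, and rearranging produces exactly $RF\cdot MSIDS \ge 4|E|^2/(m|V|)$.

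The step I expect to be the main obstacle is the bookkeeping in the per-partition Cauchy--Schwarz, rather than anything analytically deep. One must restrict the inequality to the support $\{v : d(i,v)>0\}$ so that the counting factor is precisely $|V(E_i)|$ and not $|V|$; and one must confirm that $\sum_i |V(E_i)| = RF\cdot|V|$ holds under the convention that $\rho(v)$ counts the partitions containing an edge incident to $v$, which matches $\sum_v \rho(v) = \sum_i |V(E_i)|$. Both points are routine but should be stated explicitly. The two inequalities invoked (the partition-level Cauchy--Schwarz and the global power-mean bound) are elementary and each is tight, so no further machinery is needed; equality throughout would require every partition to carry the same number of edges with all nonzero inner degrees equal, indicating when the bound is essentially sharp.
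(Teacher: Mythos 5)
Your proof is correct, but it aggregates the double array $d(i,v)$ in the opposite direction from the paper. The paper works vertex-by-vertex: it starts from $m\cdot MSIDS \geq \sum_{i}\sum_{v} d(i,v)^2$, swaps the order of summation, applies the arithmetic--quadratic mean inequality to the $\rho(v)$ nonzero terms of each inner sum to get $\sum_{i} d(i,v)^2 \geq d(v)^2/\rho(v)$, and then finishes with Cauchy--Schwarz over vertices, pairing $\sqrt{\rho(v)}$ with $d(v)/\sqrt{\rho(v)}$ so that $\bigl(\sum_{v}\rho(v)\bigr)\bigl(\sum_{v} d(v)^2/\rho(v)\bigr) \geq \bigl(\sum_{v} d(v)\bigr)^2 = 4|E|^2$. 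You instead work partition-by-partition: Cauchy--Schwarz inside $E_i$ against the support indicator gives $4|E_i|^2 \leq |V(E_i)|\cdot MSIDS$, and summing over $i$ together with the convexity bound $\sum_{i}|E_i|^2 \geq |E|^2/m$ yields the claim. Both arguments rest on the handshake identity (yours in the per-partition form $\sum_{v} d(i,v) = 2|E_i|$, the paper's in the global form $\sum_{v} d(v) = 2|E|$) plus two convexity steps, so neither is deeper; they are dual groupings of the same quantity. What your route buys is modularity and an intermediate inequality of independent interest, $|V(E_i)| \geq 4|E_i|^2/MSIDS$, which makes the local tradeoff explicit: any partition carrying many edges with a small sum of squared inner degrees must host many vertex replicas. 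The paper's route instead isolates the per-vertex tradeoff $\sum_{i} d(i,v)^2 \geq d(v)^2/\rho(v)$: a high-degree vertex with few replicas forces large squared inner degrees somewhere. Your equality analysis (all $|E_i|$ equal, flat inner-degree profile on each support) is also more transparent than what falls out of the paper's chain. Two small points you should state explicitly if you write this up: the division by $MSIDS$ requires $MSIDS > 0$, which holds whenever $E \neq \emptyset$ (and the case $E = \emptyset$ is trivial since the right-hand side vanishes); and the identification $\sum_{i}|V(E_i)| = RF\cdot|V|$ is immediate from the paper's definition \ref{eq:rf}, so the consistency check with $\rho(v)$ that you flag is indeed only bookkeeping.
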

	
	\begin{proof}
		First, let us note that by definition, $MSIDS \geq \sum_{v\in V} d(i,v)^2$ for every partition $i$. Taking the sum of these inequalities, we obtain:
		
	\begin{equation}
		m \cdot MSIDS \geq \sum_{i=1}^m \sum_{v\in V} d (i,v)^2
	\end{equation}
		
		Let's change the order of summation in the right part to get:
		
	\begin{equation}
		\sum_{i=1}^m \sum_{v\in V} d (i,v)^2 = \sum_{v\in V} \sum_{i=1}^m d (i,v)^2 
	\end{equation}
	
		It is important to note that $\rho(v)$ represents the exact number of partitions that contain vertex $v$. Thus, if vertex $v$ is not present on the $i$-th partition, then $d(i,v)=0$. Therefore, in the summation, the number of non-zero elements is equal to $\rho(v)$ instead of $m$.
		
		Let us use the inequality of arithmetic and quadratic means:
		
		\begin{equation*}
		\begin{gathered}
		\sum_{v\in V} \sum_{i=1}^m d (i,v)^2 = \sum_{v\in V} \rho (v) \frac{\sum_{i=1}^m d (i,v)^2}{\rho(v)} \geq 
		\sum_{v\in V} \rho (v) (\frac{\sum_{i=1}^m d(i,v)}{\rho (v) })^2 = 
		\sum_{v\in V} \frac{d(v)^2} {\rho (v)}
		\end{gathered}
	\end{equation*}
		
		This leads to inequality
		
		\begin{equation*}
		RF \cdot MSIDS \geq \rho \frac{1}{m} \sum_{v\in V}\frac{d(v)^2}{\rho (v)} = 
		\frac{1}{m|V|}  (\sum_{v\in V}\rho (v))\sum_{v\in V} \frac{d(v)^2}{\rho (v) }
		\end{equation*}
		
		Cauchy-Bunyakovsky-Schwarz inequality for sets of numbers 
		$\sqrt{\rho (v)}$ and $\frac{d(v)}{\sqrt{\rho (v)}} $ for every $v$ implies that
		\begin{equation*}
		(\sum_{v\in V}\rho (v))\sum_{v\in V}\frac{d(v)^2}{\rho (v)}\geq (\sum_{v\in V} d(v))^2=4|E|^2
	\end{equation*}
		
		And therefore:
		
		\begin{equation*}
		\begin{gathered}
		RF \cdot MSIDS \geq \rho \frac{1}{m} \sum_{v\in V}\frac{d(v)^2}{\rho (v)} 
		= \frac{1}{m|V|}  (\sum_{v\in V}\rho (v))(\sum_{v\in V}\frac{d(v)^2}{\rho (v)}) \geq\frac{4|E|^2}{m|V|}
		\end{gathered}
	\end{equation*}
		
	\end{proof}

	\section{Empirical results}\label{results}
	
	\subsection{Metrics calculations}
	
	In this subsection, we test the proposed method and consider the balance (formula \ref{eq:balance}), replication factor (\ref{eq:rf}), and the maximal sum of inner degrees (\ref{eq:msids}) of the resulting partitions.
	The datasets used here are graphs uk-2002, graph500-22, graph500-24, graph500-25, graph500-26. Graph parameters are provided in the table \ref{tab:graphs}. Partition algorithm Edge2DPartition is considered here as a baseline.

 \begin{table}[h]
  \caption{\label{tab:graphs}Graph datasets parameters: number of vertices and edges}
		\centering
		\begin{tabular}{c|c|c|c|c|c|}
			graphs & uk-2002 & graph500-22 & graph500-24 & graph500-25 & graph500-26 \\\hline
			$|V|$ &  18 520 486 & 2 396 657 & 8 870 942 & 17 062 472 &  32 804 978  \\\hline
            $|E|$ &  298 113 762 & 64 155 735 & 260 379 520 & 523 602 831 &  1 051 922 823
		\end{tabular}
	\end{table}
	
	Table \ref{tab:rfs} shows the replication factor values for the uk-2002 graph using different thresholds for the DBH-X method. The best value of the threshold is 80, so it is used further.
	
		\begin{table}[h]
  \caption{\label{tab:rfs}RF for different thresholds for uk-2002 graph (number of partitions 220).}
		\centering
		\begin{tabular}{c|c|c|c|c|c|c|}
			$\tau$ & 1 & 50 & 70 & 80 & 90 & 100\\\hline
			RF &  8.1619 & 8.1054 & 8.1014 & {\bf 8.1011 } &  8.1018 & 8.1038
		\end{tabular}
	\end{table}
	
	Table \ref{tab:metrics} displays the values of three metrics: balance, replication factor, and MSIDS for EdgePartition2D, standard DBH, and DBH-X with different parameters. A threshold is denoted as $\tau$. The number of partitions is 220. We observed that the version of DBH-X with a threshold and spread equal to 2 achieves the best replication factor, while the version with a spread of 20 achieves the best MSIDS. We can see that in this case, the replication factor with spread 2 is also improved, together with MSIDS, but higher spread values cause the replication factor to increase. It can also be checked that RF multiplied by MSIDS doesn't reach the lower bound from the theorem \ref{inequality}.

	\begin{table}[h]
    \caption{\label{tab:metrics}A metrics table. The number of partitions is 220}
		\centering
		\begin{tabular}{c|c|c|c|c|}
			graph & partition algorithm & balance & RF & MSIDS\\\hline
			uk-2002 & EdgePartition2D & 1.015 & 11.18 & 83.98 E6 \\
			& DBH & 1.006 & 8.16 & 84.16 E6\\
			& DBH-X $\tau=80$ & 1.007 & 8.105 & 85.64 E6\\
			& DBH-X  $\tau$=80\ spread = 2 & 1.027 & {\bf 7.37} & 57.92 E6\\
			& DBH-X  $\tau$=80\ spread = 5 & 1.003 & 10.65 & 24.64 E6\\
			& DBH-X  $\tau$=80\ spread = 10 & 1.024 & 11.53 &  27.68 E6\\
			& DBH-X  $\tau$=80\ spread = 20 & 1.022 & 15.71 & {\bf 13.96 E6}\\\hline 
			graph500-24 & EdgePartition2D & 1.061 & 11.77 & 8.41 E8\\
			& DBH & 1.050 & 8.16 & 21.08 E8 \\
			& DBH-X $\tau=500$ & 1.050 &  8.11 & 20.62 E8\\
			& DBH-X  $\tau$=500 spread = 2 & 1.036 & {\bf 6.68} & 14.96 E8\\
			& DBH-X  $\tau$=500 spread = 5 & 1.017 & 9.95 & 6.36 E8\\
			& DBH-X  $\tau$=500 spread = 10 & 1.017 & 9.57 & 7.99 E8\\
			& DBH-X  $\tau$=500 spread = 20 & 1.009 & 13.17 & {\bf 3.88 E8}\\
		\end{tabular}
	\end{table}
	
	\subsection{Performance improvement}
	
	The tests were conducted on a 4-node Apache Spark cluster with the following configuration: 4 nodes in total with one master node (driver) and 3 worker nodes. Each node had the following specifications: Kunpeng servers, 96 cores @2.6 MHz, 384 GiB RAM, and 10Gb/s network.

	We conducted testing using the GraphX framework and considered two algorithms: PageRank and Label Propagation. For the Label Propagation algorithm, we set the number of iterations to 10, while for PageRank, we ran it until convergence.
	
	The number of partitions is also important for runtime performance. Since this value influences metrics, it doesn't make sense to compare the replication factors of the two partition algorithms with different numbers of partitions. But for runtime, we chose the best parameters for each algorithm. In the figures \ref{fig:graphics1} and \ref{fig:graphics2}	 we provide graphics of runtime dependence on the number of partitions for EdgePartition2D and DBH-X partition algorithms.
	
	\begin{figure}[h]
		\centering
		\includegraphics[width=0.8\textwidth]{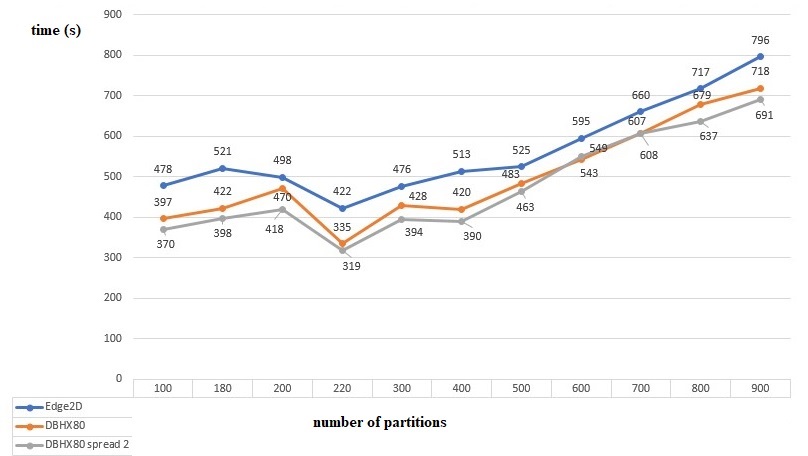}
		\caption{\label{fig:graphics1} Pagerank algorithm runtime using Edge2D and DBH-X partitions for graph uk-2002}
	\end{figure}
	
	\begin{figure}[h]
		\centering
		\includegraphics[width=0.8\textwidth]{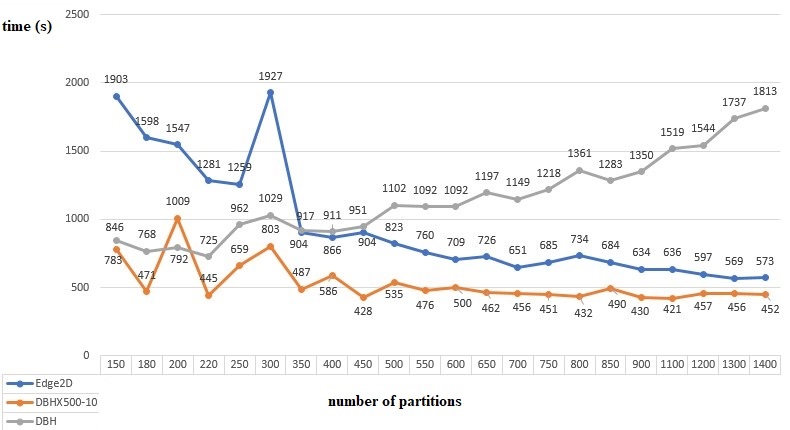}
		\caption{\label{fig:graphics2} Label propagation algorithm runtime using Edge2D, DBH and DBH-X partitions for graph graph500-24}
	\end{figure}
	
	We distinguish algorithm runtime and end-to-end time where graph partition process time is also included.  
	
	The improvement ratio in Tables \ref{tab:runtime} --- \ref{tab:lpa3} was calculated as the ratio of the runtime (end-to-end) for EdgePartition2D (Edge2D) to the runtime for DBH-X: 
	$$
	improv. = time_{Edge2D} / time_{DBH-X}.
	$$
	
	The algorithm time improvement is calculated similarly: 
	$$
	alg.time\ improv. = alg.time_{Edge2D} / alg.time_{DBH-X}.
	$$
	 
	In the table \ref{tab:lpa} we provide runtime and correspondent improvement for partition by 220 parts. The baseline here is EdgePartition2D with the same number of partitions. It can be seen that for the best runtime, we need some balance between the replication factor and MSIDS (data from table \ref{tab:metrics}).

 It's worth noting that we don't always need MSIDS minimization. For example, for
 Connected Components algorithm increasing spread parameter doesn't give runtime improvement, better runtime has DBH-X version with spread = 1.%, see table \ref{tab:cc}.
	
	In the tables \ref{tab:lpa_btb} --- \ref{tab:lpa3} we provide a comparison between EdgePartition2D for the best number of partitions with DBH and DBH-X partitions with the best number of partitions. Besides graph500-24, we provide runtime also for graph500-25 and graph500-26.
	 
	\begin{table}[h!]
    \caption{\label{tab:runtime}PageRank algorithm run until convergence time (in seconds) for graph uk-2002, 220 partitions}
		\centering
		\begin{tabular}{c|c|c|c|c|c}
			partition & alg. & end-to-end & alg.time & end-to-end\\
			 & time & time & improv. & improv. \\\hline
			% & grid & DBH & grid & DBH-X \\\hline
			EdgePartition2D & 421.9 & 430.4 & - & - \\
			DBH-X $\tau$=80 & 334.7 & 358.5 & 1.26x & 1.20x\\
			DBH-X $\tau$=80 &  &  &  & \\
			spread=2 & 319.0 & 331.15 & 1.32x & 1.29x \\
			
		\end{tabular}
	\end{table}

	\begin{table}[h!]
    \caption{\label{tab:lpa}Label propagation algorithm runtime (in seconds) for graph graph500-24. num parts = 220}
		\centering
		\begin{tabular}{c|c|c|c|c|c|}
			partition & part. & alg. & end-to- & alg.time & end-to- \\
			 & time & time & -end time & improv. & -end impr.
			 \\\hline
			% & grid & DBH & grid & DBH-X \\\hline
			Edge2D & 7.2 & 1281.4 & 1288.6 & - & - \\\hline
			DBH & 27.8 & 724.7 & 752.5 & 1.76x & 1.71x \\\hline
			DBH-X $\tau$=500 & 16.7 & 770.0 & 786.7 & 1.66x & 1.64x \\
			spread = 1 & & & & & \\\hline
			DBH-X $\tau$=500 & 16.4 & 515.8 & 532.2 & 2.49x & 2.42x \\
			spread = 2 & & & & & \\\hline
			DBH-X $\tau$=500 & 16.5 &  548.6 & 565.1 & 2.33x & 2.28x\\
			spread = 5 & & & & & \\\hline
			DBH-X $\tau$=500 & 16.4 & 442.3 & 458.7 & {\bf 2.89x} & {\bf 2.81x}\\
			spread = 10 & & & & & \\\hline
			DBH-X $\tau$=500 & 16.5 & 531.5 & 548.0 & 2.41x & 2.35x\\
			spread = 20 & & & & & \\
		\end{tabular}
		
	\end{table}

	\begin{table}[h!]
    \caption{\label{tab:lpa_btb}Label propagation algorithm best runtime (in seconds) for graph graph500-24. Number of partitions = 220 for DBH, 1100 for DBH-X and 1300 for Edge2D}
	\centering
	\begin{tabular}{c|c|c|c|c|c|}
		partition & part. & alg. & end-to- & alg.time & end-to- \\
		& time & time & -end time & improv. & -end impr.
		\\\hline
		Edge2D & 8.0 & 568.9 & 577.0 & - & - \\\hline
		DBH & 27.8 & 724.7 & 752.5 & - & - \\\hline
		DBH-X $\tau$=500 & 21.2 & 420.7 & 442.0 & 1.35x & 1.30x\\
		spread = 10 & & & & 
	\end{tabular}
\end{table}
	
		\begin{table}[h!]
        \caption{\label{tab:lpa2}Label propagation algorithm runtime (in seconds) for graph graph500-25. Number of partitions = 220 for DBH, 1100 for DBH-X, and 1300 for Edge2D}
		\centering
		\begin{tabular}{c|c|c|c|c|c|}
			partition & part. & alg. & end-to- & alg.time & end-to- \\
			& time & time & -end time & improv. & -end impr.
			\\\hline
			Edge2D & 13.4 & 1447.1 & 1460.5 & - & - \\\hline
			DBH & 58.0 & 1729.7 & 1787.7 & - & - \\\hline
			DBH-X $\tau$=500 & 34.4 & 935.5 & 966.9 & 1.55x & 1.51x\\
			spread = 10 & & & & & \\		
		\end{tabular}
	\end{table}

	\begin{table}[h!]
    \caption{\label{tab:lpa3}Label propagation algorithm runtime (in seconds) for graph graph500-26. Number of partitions = 220 for DBH, 1100 for DBH-X, and 1300 for Edge2D}
    \centering
	\begin{tabular}{c|c|c|c|c|c|}
		partition & part. & alg. & end-to- & alg.time & end-to- \\
		& time & time & -end time & improv. & -end impr.
		\\\hline
		Edge2D & 23.7 & 3567.5 & 3591.2 & - & - \\\hline
		DBH & 116.3 & 4591.4 & 4707.7  & - & - \\\hline
		
		DBH-X $\tau$=500 & 63.6 & 2165.9 & 2229.5 & 1.65x & 1.61x\\
		spread = 10 & & & & & \\
	
	\end{tabular}
\end{table}

	\section{Conclusion}\label{conclusion}
	
	Our paper presents DBH-X, a novel vertex-cut graph partitioning method that extends the well-known DBH method. We also introduce a novel metric for graph partitioning and demonstrate its relation with the replication factor. Our study shows that DBH-X strikes a balance between this new metric and the replication factor, leading to improved performance compared to DBH in practical applications. We believe graph partitioning methods that utilize graph degrees have the potential for further improvement, and we plan to explore this in future research.
	
 \bibliographystyle{plain}
  \bibliography{bibliography}

\end{document}